\definecolor{darkred}  {rgb}{0.5,0,0}
\definecolor{darkblue} {rgb}{0,0,0.5}
\definecolor{darkgreen}{rgb}{0,0.5,0}
\newtheorem*{theorem}{Theorem}
\newtheorem{lemma}{Lemma}
\newcommand{\ket}[1]{|#1\rangle}
\newcommand{\bra}[1]{\langle#1|}
\newcommand{\ketbra}[2]{|#1\rangle\langle#2|}
\newcommand{\proj}[1]{\ketbra{#1}{#1}}
\def\>{\rangle}
\def\<{\langle}
\newcommand{\ct}{^{\dagger}}         
\newcommand{\x}{\otimes}
\DeclareMathOperator{\tr}{Tr}
\DeclarePairedDelimiter{\set}{\lbrace}{\rbrace}
\DeclarePairedDelimiter{\abs}{\lvert}{\rvert}
\DeclarePairedDelimiter{\norm}{\lVert}{\rVert}
\DeclarePairedDelimiter{\ceil}{\lceil}{\rceil}
\DeclarePairedDelimiter{\of}{\lparen}{\rparen}
\DeclarePairedDelimiter{\ob}{\lbrack}{\rbrack}
\newcommand{\mc}[1]{\mathcal{#1}}
\newcommand{\N}{\mc{N}}
\newcommand{\E}{\mc{E}}
\newcommand{\Ic}{I_\mathrm{coh}}
\newcommand{\reg}[1]{\mathsf{#1}}
\definecolor{darkred}  {rgb}{0.5,0,0}
\definecolor{darkblue} {rgb}{0,0,0.5}
\definecolor{darkgreen}{rgb}{0,0.5,0}
\def\Eflag{\reg{F}}
\def\Gflag{\reg{G}}
\newcommand{\trans}[2]{#2 \to #1}
\def\A{\reg{A}}
\def\B{\reg{B}}
\def\1{\mathbb{1}}
\def\SWin{\reg{S}}
\def\SWout{\reg{S}}
\def\Min{\reg{A}}
\def\Mout{\reg{B}}
\def\R{\reg{R}}
\def\chan{\mathcal{M}}
\def\ngam{\tilde{\Gamma}_{\kappa}} 
\def\gam{\Gamma}
\def\id{\mathcal{I}}
\def\Afull{\tilde{\reg{A}}}
\def\Bfull{\tilde{\reg{B}}}
\def\Ashield{\reg{A}}
\def\mms{\mu}   
\def\mes{\phi}  
\def\sap{\mms}
\def\Qsys{\reg{Q}}
\def\Akey{\reg{a}}
\def\Bshield{\reg{B}}
\def\Bkey{\reg{b}}
\def\key{\Akey\Bkey}
\def\shield{\Ashield\Bshield}
\def\A{\reg{A}}
\def\B{\reg{B}}
\def\Q{\reg{Q}}
\def\Eflag{\reg{F}}
\def\Gflag{\reg{G}}
\def\block{\reg{C}}
\def\NN{N}
\def\r{r}
\def\m{m}
\def\q{q}
\def\Ox{\bigotimes}
\def\HHHOstate{\zeta}
\def\GammaChoi{\HHHOstate}
\begin{document}

\title{Unbounded number of channel uses are required to see quantum capacity}

\newcommand{\DAMTP}{Department of Applied Mathematics and Theoretical Physics, University of Cambridge, Cambridge CB3 0WA, U.K.}

\author{Toby Cubitt}
\affiliation{\DAMTP}
\author{David Elkouss}
\affiliation{Departamento de An\'alisis Matem\'atico and Instituto de Matem\'atica Interdisciplinar, Universidad Complutense de Madrid, 28040 Madrid, Spain}
\author{William Matthews}
\affiliation{\DAMTP}
\affiliation{Statistical Laboratory, University of Cambridge, Wilberforce Road, Cambridge CB3 0WB, U.K.}
\author{Maris Ozols}
\affiliation{\DAMTP}
\author{David P\'erez-Garc\'ia}
\affiliation{Departamento de An\'alisis Matem\'atico and Instituto de Matem\'atica Interdisciplinar, Universidad Complutense de Madrid, 28040 Madrid, Spain}
\author{Sergii Strelchuk}
\affiliation{\DAMTP}


\date{\today}

\maketitle

{\bf Transmitting data reliably over noisy communication channels is one of the most important applications of information theory, and well understood when the channel is accurately modelled by classical physics. However, when quantum effects are involved, we do not know how to compute channel capacities.
The capacity to transmit quantum information is essential to quantum cryptography and computing, but the formula involves maximising the coherent information over arbitrarily many channel uses~\cite{Lloyd_97,Shor_02,Devetak_05}. This is because entanglement across channel uses can increase the coherent information~\cite{DiVincenzo_98}, even from zero to non-zero~\cite{Smith_08}! However, in all known examples, at least to detect whether the capacity is \emph{non-zero}, two channel uses already suffice~\cite{Smith_08,Brandao_12}. Maybe a finite number of channel uses is always sufficient? Here, we show this is emphatically not the case: for any $n$, there are channels for which the coherent information is zero for $n$ uses, but which nonetheless have capacity. This may be a first indication that the quantum capacity is uncomputable.}

In the classical case, not only can we exactly characterise the maximum rate of communication over any channel -- its \emph{capacity} -- we also have practical error-correcting codes that attain this theoretical limit.  
It is instructive to review why the capacity of classical channels is a solved problem. 
 Even though optimal communication over a discrete, memoryless classical channel involves encoding the information across many uses of the channel, Shannon showed that a channel's capacity is given mathematically by optimising an entropic quantity (the \emph{mutual information}) over a \emph{single} use of the channel. 
This follows immediately from the fact that mutual information is \emph{additive}. Thanks to this, the capacity of any classical channel can 
be computed efficiently.

It is for this reason that additivity questions for quantum channel capacities took on such importance, and why the major recent breakthroughs proving that additivity is violated~\cite{Hastings,Smith_08} had such an impact. 
It has been known for some time~\cite{Lloyd_97,Shor_02,Devetak_05} that the quantum capacity is given by a \emph{regularised} expression---the optimisation of an entropic quantity (the \emph{coherent information}) in the limit of arbitrarily many uses of the channel:
\begin{align*}\label{eq:Q}
  Q^{(n)}(\N) &:= \frac{1}{n} \max_{\rho^{(n)}} \Ic(\N^{\otimes n}, \rho^{(n)}),\\
  Q(\N) &:= \lim_{n\to\infty} Q^{(n)}(\N),
\end{align*}
(Here, $Q^{(n)}(\N)$ is the coherent information $\Ic$ maximised over a joint input $\rho^{(n)}$ for $n$ uses of the channel $\N$.) However, the regularisation renders computing the quantum capacity infeasible; it involves an optimisation over an infinite parameter space.

Were the coherent information additive (i.e.\ $Q^{(n)}(\N) = Q^{(1)}(\N)$), the regularisation could be removed and the quantum capacity could be computed as easily as the capacity of classical channels. However, this is not the case. The first explicit examples of this \emph{superadditivity} phenomenon were given by Di~Vincenzo et al.~\cite{DiVincenzo_98}, and extended by Smith et al.~\cite{Smith_07}. For these examples (where $\mc{N}$ is a particular depolarising channel) it was shown (numerically) that $0 \leq Q^{(1)}(\N) < Q^{(n)}(\N)$ for small values of $n\le 33$.

While the \emph{classical} capacity of quantum channels also involves a regularised formula~\cite{Hastings}, we at least know precisely in which cases it is \emph{zero}: simply for those channels whose output is completely independent of the input. The set of zero-\emph{quantum}-capacity channels is much richer. Indeed, we do not even have a complete characterisation of \emph{which} channels have zero quantum capacity. To date, we know of only two kinds of channels with zero quantum capacity: \emph{antidegradable} channels~\cite{BDS97,CRS08} and \emph{entanglement-binding} channels~\cite{HHH00}. The former has the property that the environment can reproduce the output, thus $Q = 0$ by the no-cloning theorem~\cite{WZ82}. The latter can only distribute PPT entanglement, which cannot be distilled by local operations and classical communication~\cite{PPT}, which again implies $Q = 0$.

This has dramatic consequences. It is possible to take two quantum channels above, $\N_1$ antidegradable and $\N_2$ entanglement-binding, which individually have no capacity whatsoever, yet when used together can transmit quantum information reliably ($Q(\N_1 \otimes \N_2) > 0$). This \emph{superactivation} phenomenon was discovered recently by Smith and Yard~\cite{Smith_08}. They also used their examples to construct a single channel $\N$ exhibiting an extreme form of superadditivity of the coherent information, where $0 = Q^{(1)}(\N) < Q^{(2)}(\N)$. (In their construction, having two uses of $\N$ effectively enables one use of $\N_1$ and one of $\N_2$.) Even stronger superactivation phenomena have been shown in the context of zero-error communication over quantum channels~\cite{Alon,Cubitt,Cubitt!,Cubitt!!,Shirokov_14}.

On the one hand, additivity violation means regularisation is required in formulas for computing capacities. On the other hand, it also means that \emph{entanglement} can protect information from noise (the coherent information is additive for unentangled input states). 

But just how bad can this additivity violation be? One might hope that, at least in determining whether the quantum capacity is non-zero, one need only consider a finite number of uses of a channel. Indeed, since the Smith and Yard construction relies on combining the only two known types of zero-capacity channels, one might dare to hope that even two uses suffice.
%
(Similarly, for the classical capacity of quantum channels the only known method for constructing examples of additivity violation~\cite{HaydenWinter,Hastings} cannot give a violation for more than two uses of a channel, and there is some evidence that this may be more than just a limitation of the proof techniques~\cite{Ashley}.)
Was this indeed the case, additivity violation would be reduced to something relatively benign: entangling the inputs across more than two uses of the channel would give no advantage. And one would be able to compute the quantum capacity by optimising the coherent information over two uses of the channel, which is not substantially more difficult than the optimisation over a single channel use.

In this paper, we show for the first time that this is not the case: additivity violation is as bad as it could possible be. We prove that, for any $n$, one can construct a channel $\N$ for which the coherent information of $n$ uses is zero ($Q^{(n)}(\N)=0$), yet for a larger number of uses the coherent information is strictly positive, implying that the channel has non-zero quantum capacity ($Q(\mathcal N)>0$). This is also the first proof
that there can be a gap between $Q^{(n)}(\N)$ and the quantum capacity for an arbitrarily large number $n$ of uses of the channel. Our result implies that, in general, one must consider an arbitrarily large number of uses of the channel \emph{just to decide whether the channel has any quantum capacity at all!}

Perhaps the earliest indication that determining the quantum capacity may be a difficult problem comes from the work of Watrous~\cite{Watrous_04}, who showed that an arbitrarily large number of copies of a bipartite quantum state can be required for entanglement distillation assisted by two-way classical communication. Our result can be regarded as the counterpart of~\cite{Watrous_04} for the quantum capacity (which is mathematically equivalent to entanglement distillation assisted by \emph{one-way} communication). However, the proof ideas and techniques of~\cite{Watrous_04} require two-way communication, thus they do not apply to the usual capacity setting. Our result is instead based on the ideas of Smith and Yard, in particular the intuition provided by Oppenheim's commentary thereon~\cite{Oppenheim_08}.



This intuition comes from a class of bipartite quantum states called \emph{pbits} (private bits)~\cite{Horodecki_09b}: $\rho_{\Akey\Ashield\Bkey\Bshield} = \tfrac{1}{2}(\proj{\phi^+}_{\key}\otimes\sigma^+_{\shield} + \proj{\phi^-}_{\key}\otimes\sigma^-_{\shield})$, together with the standard equivalences between quantum capacity (sending entanglement over a channel) and distilling entanglement from the Choi-Jamio\l{}kowsky state associated with the channel. Here, $\ket{\phi^{\pm}}$ are Bell states, and $\sigma^{\pm}$ are hiding states~\cite{Eggeling_02}. The latter are orthogonal (globally perfectly distinguishable), but cannot be distinguished using local operations and classical communication (LOCC).

If $\rho_{\Akey\Ashield\Bkey\Bshield}$ is shared between Alice (who holds $\Akey\Ashield$) and Bob (who holds $\Bkey\Bshield$), then they share at least one ebit of entanglement due to the Bell states. But this entanglement is inaccessible to them unless they can determine which of the two Bell states they share. This they could do if only they could determine which hiding state they have. But $\sigma^{\pm}$ cannot be distinguished by LOCC, preventing them from extracting the entanglement from $\rho_{\Akey\Ashield\Bkey\Bshield}$. The $\key$ part of the system is usually called the ``key'', and $\shield$ the ``shield'' (as it decouples the systems $\key$ from any external system).

Now imagine they have access to a quantum erasure channel $\E_{\frac{1}{2}}$, which with probability $1/2$ transmits its input perfectly, and with probability $1/2$ completely erases it. It is well known that such a channel cannot be used to transmit any entanglement. However, if they also share $\rho_{\Akey\Ashield\Bkey\Bshield}$, Alice can use the erasure channel to send her part $\Ashield$ of the shield to Bob. If the erasure channel transmits, Bob now holds the entire $\shield$ system and can now distinguish $\sigma^{\pm}$. Thus, with probability $1/2$, Alice and Bob can now extract the entanglement from $\rho_{\Akey\Ashield\Bkey\Bshield}$.

Instead of supplying Alice and Bob with the state $\rho_{\Akey\Ashield\Bkey\Bshield}$ and an erasure channel, we instead supply them with a \emph{switched} channel. This has an auxiliary classical input that controls whether the channel acts as $\E_{\frac{1}{2}}$ or $\Gamma$, where $\Gamma$ is the channel with Choi-Jamio\l{}kowsky state $\rho_{\Akey\Ashield\Bkey\Bshield}$. The above argument then implies that no quantum information can be sent over a single use of the channel, but it can be sent using two uses, by switching one to $\E_{\frac{1}{2}}$ and the other to $\Gamma$.

This is the intuition behind the Smith and Yard construction~\cite{Oppenheim_08}. However, because it is constructed out of two very particular types of quantum channels, this idea does not seem to extend to larger numbers of uses. Nonetheless, the intuition behind our result is based on a refinement of these ideas, which we now sketch.

We want to achieve two seemingly contradictory goals: (1)~To prevent Alice from sending any quantum information to Bob over $n$ of uses of the channel. (2)~To permit this when Alice has access to some larger number of uses $N>n$. We can achieve~(1) by increasing the erasure probability of the erasure channel to something much closer to~1, and also adding noise to the $\Gamma$ channel; the noise then swamps any entanglement. The problem is that this seems to also render~(2) impossible. If the channel is so noisy that it destroys all entanglement sent through it, then (by definition) no amount of coding over multiple uses of the channel can succeed in transmitting quantum information.

However, note that the information that Alice needs to send to Bob in order to extract entanglement from the pbit $\rho_{\Akey\Ashield\Bkey\Bshield}$ is essentially classical. Bob just needs to know one classical bit of information to distinguish the two hiding states. This suggests that \emph{classical} error correction might help Alice send this information to Bob, even when the channel is very noisy. The intuition behind our proof is that a simple classical repetition code suffices. Instead of the pbit $\rho_{\Akey\Ashield\Bkey\Bshield}$, we use a pbit $\tfrac{1}{2}(\proj{\phi^+}_{\key}\otimes\sigma^+_{\A_1\B_1}\otimes\cdots\otimes\sigma^+_{\A_N\B_N} + \proj{\phi^-}_{\key}\otimes\sigma^-_{\A_1\B_1}\otimes\cdots\otimes\sigma^-_{\A_N\B_N})$ that contains $N$ 
copies of the shield. For Bob to distinguish the hiding states, it suffices for one copy to make it through the erasure channel. Alice now tries to send all of the copies of the shield through many uses of the erasure channel. However high the erasure probability, the probability that at least one will get through can be made arbitrarily high for sufficiently many attempts.

We now give a more precise description of our construction. The \emph{erasure channel with erasure probability} $p$ is $\E_p^{\trans{\Eflag\B}{\A}} := (1-p) \ketbra{0}{0}^{\Eflag} \x \id^{\trans{\B}{\A}} + p \ketbra{1}{1}^{\Eflag}\x \1^{\B}/\dim (\B)$, where $\id^{\trans{\B}{\A}}$ is the identity channel from $\A$ to $\B$, and $\Eflag$ is the erasure flag. The channel $\Gamma^{\trans{\Bfull}{\Afull}}$ belongs to the class of PPT entanglement-binding channels whose Choi state is an approximate pbit~\cite{Horodecki_09b}. We show that $\Gamma$ can be constructed with $\A := \A_1 \ldots \A_N$ and $\B := \B_1 \ldots \B_N$ consisting of $N$ parts, such that even if Bob only receives part $\A_i$ of Alice's shield for any $i$, they obtain approximately one ebit of one-way distillable entanglement. Let $\ngam^{\trans{\Eflag\Bfull}{\Afull}} := \E^{\trans{\Eflag\Bfull}{\Bfull}}_{\kappa}\circ \gam^{\trans{\Bfull}{\Afull}} $ be a noisy version of the channel $\Gamma$. Our construction uses channels of the form
\begin{equation}\label{eq:ourchannel}
  \chan^{\trans{\SWout \Eflag\Bfull}{\SWin\Afull}}
  := \mc{P}_0^{\trans{\SWout}{\SWin}}
            \otimes \ngam^{\trans{\Eflag\Bfull}{\Afull}}
   + \mc{P}_1^{\trans{\SWout}{\SWin}}
            \otimes \E_p^{\trans{\Eflag\Bfull}{\Afull}}.
\end{equation}
Here $\mc{P}_i^{\trans{\SWout}{\SWin}}$ projects onto the $i$-th computational basis vector of the qubit system $\SWin$ which thereby acts
as a classical switch allowing Alice to choose whether the channel
acts as $\E_p$ or $\tilde{\Gamma}_{\kappa}$ on the main input $\Afull$.
$\SWin$ is retained in the output which lets Bob learn which choice was made.

Making the above intuition rigorous for this channel is non-trivial: First, we must prove that the coherent information of $n$ uses of the channel is strictly zero, for \emph{any} input to the channel (not just the input states from the above intuition). To this end, we cannot just directly use a pbit with $N$-copy shield of the form given above, as it would have distillable entanglement. Fortunately, we find that an approximate pbit construction from \cite{Horodecki_09b} can be adapted for the role. But then we must take this approximation into account in the proof that the channel \emph{does} have capacity. This requires a careful analysis of the various parameters of our channel to show that both of the desired properties can hold simultaneously, which requires a somewhat delicate argument. The technical arguments are described in the Methods section.

One natural question (which we leave open) is whether we can obtain a stronger form of our result with a constant upper bound on the channel dimension. It would also be interesting to see if one can obtain a result analogous to ours for the \emph{private} capacity of quantum channels. Finally, our result gives a first indication that the quantum capacity of a channel might well be an uncomputable quantity; uncomputability of the quantum capacity would necessarily imply the behaviour we have shown here.

\vspace{-1em}

\section*{Methods}
\vspace{-1em}
We state and outline the proof of our main result -- for any number of uses we can show that there exists a channel with positive capacity but zero coherent information. Formally, we prove the following:
\begin{theorem}
Let $\chan$ be the channel defined in Eq.~\eqref{eq:ourchannel}.
For any positive integer $n$, if $\kappa \in (0,1/2)$ and $p \in [(1+\kappa^{n})^{-1/n},1]$ then we can choose $\NN$ and $\Gamma$ such that:
\begin{enumerate}
  \item $Q^{(n)}(\chan) = 0$ and
  \item $Q^{(\NN+1)}(\chan) > 0$, and therefore $Q(\chan) > 0$.
\end{enumerate}
\end{theorem}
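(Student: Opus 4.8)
The plan is to treat the two claims separately and to carry the parameter dependence through carefully. For claim~(1), the goal is to show that for any input state $\rho^{(n)}$ on $n$ copies of $\chan$, the coherent information $\Ic(\chan^{\otimes n},\rho^{(n)})$ is non-positive. Because $\SWin$ is a classical switch that is copied into the output, I would first argue (by convexity of coherent information in this classical register, or by viewing the input as a mixture of branches labelled by binary strings $s\in\{0,1\}^n$) that it suffices to bound the coherent information branch by branch, where in branch $s$ the $i$-th channel acts as $\ngam$ if $s_i=0$ and as $\E_p$ if $s_i=1$. In any branch with at least one $\E_p$ factor, the high erasure probability $p \ge (1+\kappa^n)^{-1/n}$ should make that branch degradable/antidegradable-like; in the all-$\ngam$ branch, $s = 0^n$, one uses that $\ngam = \E_\kappa \circ \Gamma$ where $\Gamma$ is PPT and $\E_\kappa$ with $\kappa<1/2$ is antidegradable-ish, so the composite channel is PPT (binding), hence has zero coherent information. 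The delicate point is that these branches are entangled with one another through the common input, so one cannot literally decompose the channel as a direct sum; the right tool is that $Q^{(1)}$ of a channel that is a convex combination of a PPT channel and an $\E_p$-type channel, controlled by a classical flag in the output, is bounded by the corresponding combination — and the threshold $p \ge (1+\kappa^n)^{-1/n}$ is exactly what makes the $n$-use coherent information vanish rather than merely be small. This is where the approximate-pbit construction from~\cite{Horodecki_09b} enters: a genuine pbit with $N$-copy shield has distillable entanglement, so $\Gamma$ must be the PPT approximation, and one must check the approximation is still good enough that claim~(1) holds exactly (zero, not just $o(1)$), which forces constraints between $N$, $\kappa$, $p$ and the approximation quality $\epsilon$.

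For claim~(2), I would exhibit an explicit coding strategy using $N+1$ channel uses that achieves positive coherent information. Alice sets the switch to $\ngam$ on one use (say the first) and to $\E_p$ on the remaining $N$ uses. Through the $\ngam$ use Bob receives a noisy copy of the approximate pbit's Choi state — a state that, were the shield intact, has about one ebit of one-way distillable entanglement. Alice then sends the $N$ pieces $\A_1,\dots,\A_N$ of the shield through the $N$ erasure channels $\E_p$. Each piece survives with probability $1-p>0$, so with probability $1-p^N$ at least one $\A_i$ reaches Bob, and by construction $\Gamma$ was chosen so that a single surviving shield piece suffices for Bob to distinguish $\sigma^\pm$ and hence to access the ebit. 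Since the erasure flags $\Eflag$ tell Bob exactly which pieces arrived, and the switch register $\SWin$ is copied out, Bob knows which branch occurred; the one-way distillable entanglement (equivalently $Q$) of the overall resource is then at least $(1-p^N)$ times roughly one ebit, minus corrections from the noise $\E_\kappa$ and from the pbit approximation $\epsilon$. Choosing $N$ large makes $1-p^N$ close to $1$ even for $p$ close to $1$; choosing $\kappa$ away from the $p$-threshold and $\epsilon$ small makes the corrections small. One then invokes the standard equivalence between quantum capacity / one-way distillation and $Q^{(m)}>0$ for some $m$ (here $m=N+1$), giving $Q^{(N+1)}(\chan)>0$ and hence $Q(\chan)>0$.

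The main obstacle is the simultaneous satisfaction of both requirements: claim~(1) wants the channel maximally noisy (erasure probability right at the threshold, plus extra noise $\E_\kappa$ on $\Gamma$, plus a PPT-approximate rather than exact pbit), while claim~(2) wants enough signal to survive $N+1$ uses. The resolution is quantitative: one first fixes $n$, then picks $\kappa\in(0,1/2)$ and $p$ in the stated interval, and only \emph{afterwards} chooses $N$ large and the pbit-approximation error $\epsilon$ small — crucially, $N$ and $\epsilon$ do not feed back into the $n$-use bound because the threshold $p \ge (1+\kappa^n)^{-1/n}$ was engineered to kill the $n$-use coherent information for \emph{any} PPT-ish $\Gamma$ of the relevant form, independently of how large $N$ is. So the ordering of quantifiers does the work. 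I expect the genuinely delicate estimate to be showing that the $n$-use coherent information is \emph{exactly} zero (not just tiny) given the approximate pbit; this likely uses that PPT-ness is preserved under the whole construction and that a state with non-positive partial transpose in every branch yields non-positive coherent information, so the approximation must be PPT-preserving, which is precisely the property the construction in~\cite{Horodecki_09b} is designed to have. The remaining work is bookkeeping of the parameters, which I would defer to the detailed argument.
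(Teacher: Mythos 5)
Your overall plan (reduce to definite switch settings, get claim~(1) from PPT-ness of $\gam$ plus a threshold on $p$, get claim~(2) from the one-$\ngam$-plus-$N$-erasures input with a continuity correction for the approximate pbit, and note that the quantifier ordering lets $\NN$ grow without disturbing claim~(1)) matches the paper's structure, and your part~(2) sketch is essentially the paper's Lemma~\ref{lem:achiev}. However, there is a genuine gap in your argument for claim~(1): the mechanism you offer for branches containing at least one $\E_p$ factor --- that the high erasure probability makes the branch ``degradable/antidegradable-like'', or that ``non-positive partial transpose in every branch'' yields non-positive coherent information --- does not work. For any $p<1$ the channel $\E_p$ is \emph{not} PPT (the unerased part of its Choi state is a maximally entangled block tagged by the flag), and although $\E_p$ is antidegradable for $p\ge 1/2$, the tensor product of an antidegradable erasure channel with a PPT-binding channel can have strictly positive coherent information --- this is precisely the Smith--Yard superactivation effect, and indeed the paper's own achievability argument shows that the ``mixed'' branch $\ngam\otimes\E_p^{\otimes \NN}$ has positive coherent information once $\NN$ is large, even for $p$ above your threshold. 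So no argument that treats each mixed branch as a zero-capacity-like channel can succeed: the vanishing of $Q^{(n)}$ is a quantitative statement tied to the number of uses, not a structural property of the branch channels, and your sketch never supplies the quantitative ingredient.

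What is missing is the strictly negative contribution of the totally-erased event, which is the heart of the paper's Lemma~\ref{lem:converse}. After reducing to a definite switch setting with $l$ uses of $\ngam$ and $n-l$ uses of $\E_p$, one further decomposes by the erasure-flag outcomes via \eqref{flagged-coh}: the all-erase event contributes $-S(\rho_l)$ by \eqref{total-erasure} with probability $\kappa^l p^{n-l}$; the event where every $\E_p$ erases but not every $\ngam$ does is bounded by $0$ using data processing and PPT-ness of $\gam$ (this is the \emph{only} place PPT enters); and all remaining events are bounded by $+S(\rho_l)$ with total probability $1-p^{n-l}$. Summing gives $I_l \le \of[\big]{-\kappa^l p^{n-l} + 1 - p^{n-l}} S(\rho_l) \le \of[\big]{1-(1+\kappa^n)p^n} S(\rho_l) \le 0$ exactly when $p \ge (1+\kappa^n)^{-1/n}$; it is this cancellation between the negative all-erase term and the positive non-erase term, not antidegradability, that produces the threshold. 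Relatedly, your suggestion that exactness of claim~(1) ``forces constraints between $N$, $\kappa$, $p$ and the approximation quality'' is off the mark: claim~(1) needs only that $\gam$ be PPT-binding, i.e.\ condition \eqref{PPT-params}, independently of how close its reduced Choi state is to a perfect pbit --- the approximation quality matters only in claim~(2), where it enters through the Alicki--Fannes bound and the requirement $\Delta < 1-p^{\NN}-\kappa/(1-\kappa)$.
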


 The proof is divided in two parts. We first prove that, given $n$ and $\kappa$, for any $\Gamma$ with zero capacity there is a range of $p$ that makes the coherent information of $\chan^{\x n}$ zero. In the second part we prove that there exists $\Gamma$ with zero capacity
such that $\chan$ has positive capacity.

%

For the first part we can simplify the analysis of $\chan^{\x n}$
by showing that it is optimal to make a definite choice
(i.e.~a computational basis state input) for each of the
$n$ switch registers.
For each possible setting of the $n$ switches,
the
coherent information is a convex combination
of the coherent information
for three
cases, weighted by their
probabilities: (a)~every channel erases, (b)~all of the $\E_p$
erase but not all $\tilde{\Gamma}$ erase, (c)~at least one of the
$\E_p$ does not erase (and therefore acts as the identity channel).
The coherent information for cases $(b)$ and $(c)$ can be upper bounded respectively by zero and $H(\reg{R})$, where $\reg{R}$ is a system that purifies the input. For $(a)$ it is bounded above by $-H(\reg{R})$.
Weighting by the probabilities,
we find that the total coherent information is upper-bounded
by $\of[\big]{1 - (1+\kappa^n) p^n} H(\reg{R})$.
This allows us to conclude that for any $n$ and $\kappa$ we can find $p$ such that the coherent information of $n$ uses of the channel is zero.

%

To prove the second part, we show that for fixed $\kappa,p$ we can find a $\Gamma$ with 
an $N$-copy shield
such that the coherent information of $N+1$ uses of the channel $\chan$ is positive for some $N+1>n$. We number the
channel uses $0,\ldots,N$ and label the
systems involved in the $i$-th use of the channel with superscript $i$.
Consider the following input.
The switch registers
are set to choose $\tilde{\Gamma}_{\kappa}$ for use $0$ and $\mathcal E_p$ for the remaining uses $1, \ldots, N$.
We maximally entangle subsystem $\A^{0}_i$ of $\Afull^{0}$ (which is
acted on by $\tilde{\Gamma}_{\kappa}$) with subsystem $\A^{i}_1$ of $\Afull^{i}$ (acted on
by an erasure channel). We also maximally entangle subsystem
$\Akey^{0}$ of $\Afull^{0}$ with a purifying reference system
$\Akey$ which is retained
by Alice. The remaining input subsystems are set to an arbitrary pure state.
The resulting coherent information is a convex combination of cases
where
(a) $\ngam$ 
erases,
(b) $\ngam$ 
does not erase
but all the $\mc{E}_p$ erase, and (c)
$\ngam$ 
and at least one $\mc{E}_p$ do not erase.
Case (a) contributes coherent information $-1$ weighted by
its probability $\kappa$. Case (b) contributes approximately
zero coherent information (due to a standard property of pbits).
In case (c),
after channel use $0$, Alice and Bob
share the Choi state of $\Gamma$ on systems
$\Akey \Bkey^{0} \A^{1}_1 \B^{0}_{1} \ldots \A^{N}_1 \B^{0}_{N}$,
and after the $N$ uses of $\mc{E}_p$
at least one of $\A^{1}_1 \ldots \A^{N}_1$ reaches Bob unerased.
They then share a state with approximately one ebit
of one-way distillable entanglement (coherent information $+1$).
This contribution is weighted by the probability $(1-\kappa)(1-p^N)$.
We show that for $p\in (0,1)$, $\kappa \in (0,1/2)$,
we can find a $\Gamma$ with large enough $N$
for which the overall coherent information
is positive, proving that $Q(\chan) > 0$.
Further mathematical details are given in the Supplementary Information.
\mbox{}

{\bf Acknowledgements:} DE and DP acknowledge financial support from the European CHIST-ERA project CQC (funded partially by MINECO grant PRI-PIMCHI-2011-1071). TSC is supported by the Royal Society.
MO acknowledges financial support from European Union under project QALGO (Grant Agreement No. 600700).



\begin{widetext}
\appendix
\def\bibsection{\section*{SUPPLEMENTARY REFERENCES}}

\makeatletter
\def\tagform@#1{\maketag@@@{(S\ignorespaces#1\unskip\@@italiccorr)}}
\makeatother

\vspace{-1cm}
\setcounter{equation}{0}
\section*{SUPPLEMENTARY INFORMATION}

\subsection*{Preliminaries}


In the following, each system $\Q$ is associated to a Hilbert
space of \emph{finite} dimension $\dim(\Q)$, and the Hilbert space
has an orthonormal \emph{computational} basis
$\{ \ket{i}^{\Q} : i \in \{0, \dotsc, \dim(\Q)-1 \} \}$.
For any system $\Qsys$, let $\mms^{\Qsys} := \1^{\Qsys}/\dim(\Qsys)$
denote its maximally mixed state.
Let $\A$ and $\B$ be two systems of equal dimension, $\id^{\trans{\B} {\A}}$ denote the identity channel between them, and $\Eflag$ be a binary \emph{erasure flag}.
The \emph{total erasure channel} $\E_1^{\trans{\Eflag\B} {\A}}$ maps any input state to $\ketbra{1}{1}^{\Eflag}\x \mu^{\B}$, while
$\E_p^{\trans{\Eflag\B}{\A}}
:= (1-p) \ketbra{0}{0}^{\Eflag} \x \id^{\trans{\B}{\A}} + p \E_1^{\trans{\Eflag\B}{\A}}$
denotes the erasure channel with erasure probability $p$.
For any number of uses of 
$\E_1$ and any input state $\rho$ we have
\begin{equation}
    \Ic(\mc{E}^{\x n}_1,\rho) = -S(\rho).
    \label{total-erasure}
\end{equation}

For any register $\Eflag$, a \emph{flagged channel} is of the form
$\N^{\trans{\Eflag\B}{\A}}
= \sum_{i = 0}^{\dim(\Eflag)-1} p_i \ketbra{i}{i}^{\Eflag} \x \N_i^{\trans{\B}{\A}}$.
An example is $\E_p^{\trans{\Eflag\B}{\A}}$. For any flagged channel we have
\begin{equation}
  \Ic(\N^{{\Eflag\B}{\A}},\rho^{\A})
= \sum_{i}
p_i \Ic(\N_i^{\trans{\B}{\A}},\rho^{\A}),\label{flagged-coh}
\end{equation}
which follows easily from
\begin{equation}
I(\R\rangle\B\Eflag)_{\sum_{i}
p_i \rho^{\R\B}_i \x \ketbra{i}{i}^{\Eflag}}
= \sum_{i}
p_i I(\R\rangle\B)_{\rho^{\R\B}_i}\,.\label{flagged-state-coh}
\end{equation}

For any $i \in \{0, \dotsc, \dim(\SWin)-1 \}$,
let $\mc{P}_i^{\trans{\SWout}{\SWin}}$
denote the
completely positive map
$X^{\SWin} \mapsto \proj{i}^{\SWout} X^{\SWin} \proj{i}^{\SWout}$.
A \emph{switched 
channel} is a 
channel of the form
$
  \sum_{i = 0}^{\dim(\SWin)-1}
  \mc{P}_i^{\trans{\SWout}{\SWin}} \x \N_i^{\trans{\Mout}{\Min}}
$
where each $\N_i$ is a quantum channel.
The register $\SWin$ acts as a classical
switch allowing the sender to choose between
different channels $\N_i$
to be applied on the ``main input'' $\Min$
to produce a state of the ``main output'' $\Mout$.
We will need the following simple lemma regarding switched channels:
\begin{lemma}
For any switched channel,
\begin{equation}
    \max_{\rho^{\SWin\A}}
    \Ic(\N^{\trans{\SWout \Mout}{\SWin \Min}},\rho^{\SWin\A})\\
    =
    \max_i 
    \max_{\rho^{\A}}
    \Ic(\N_i^{\trans{\B}{\A}},\rho^{\A}) \label{switched-coh}
\end{equation}
where $0\leq i \leq \dim(\SWin)-1$.
\end{lemma}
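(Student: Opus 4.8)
The plan is to prove the two inequalities "$\leq$" and "$\geq$" in Eq.~\eqref{switched-coh} separately, the second being immediate and the first being the content of the lemma. For "$\geq$": given any $i$ and any $\rho^\A$, we can feed the switched channel the input $\proj{i}^{\SWin}\x\rho^{\A}$ (tensored with a reference system purifying $\rho^\A$). Because $\mc{P}_i$ leaves $\proj{i}$ invariant and annihilates the other switch projectors, the output is exactly $\proj{i}^{\SWout}\x\N_i(\rho^\A)$, a state which is just $\N_i$'s output with a classical flag in a fixed pure state tensored on; a fixed pure ancilla changes neither entropy term in $\Ic = S(\N_i(\rho))-S((\N_i\x\id_\R)(\proj{\psi}_{\A\R}))$, so the coherent information matches $\Ic(\N_i,\rho^\A)$. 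Taking the max over $i$ and over $\rho^\A$ gives "$\geq$".

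For the harder direction "$\leq$", the idea is to reduce an arbitrary input $\rho^{\SWin\A}$ (purified by a reference $\R$ held by the sender) to the case of a definite classical switch setting. First dephase the switch register $\SWin$ in the computational basis; since the channel begins by applying the projectors $\mc{P}_i$, the output is unchanged by this dephasing, but the dephasing can only decrease (or preserve) the coherent information of the \emph{input-plus-reference} state — this is the standard fact that the coherent information $I(\R\rangle\B)$ is concave and that a pinching (measure-and-forget, recorded in an auxiliary register) does not increase it; equivalently, having first measured $\SWin$ and discarded the outcome can only hurt. After dephasing, $\rho^{\SWin\A\R}=\sum_i q_i\,\proj{i}^{\SWin}\x\rho_i^{\A\R}$, and by the switched-channel structure the global output is $\sum_i q_i\,\proj{i}^{\SWout}\x(\N_i\x\id_\R)(\rho_i^{\A\R})$. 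Now apply the flagged-channel identity \eqref{flagged-coh}–\eqref{flagged-state-coh} with $\SWout$ playing the role of the flag register $\Eflag$: the coherent information decomposes as $\sum_i q_i\,\Ic(\N_i,\rho_i^\A)$, a convex combination, hence bounded above by $\max_i\max_{\rho^\A}\Ic(\N_i,\rho^\A)$. Optimising over the original $\rho^{\SWin\A}$ then yields "$\leq$".

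The main obstacle — really the only non-formal step — is justifying that dephasing the switch register cannot increase the coherent information of the channel when optimised over inputs. One clean way to see this is to note the switch output $\SWin'$ can be freely copied by the channel (it \emph{is} copied: $\SWin$ is retained in the output), so WLOG the sender could have measured $\SWin$ herself, obtained a classical outcome $i$ with some probability, and then the state conditioned on outcome $i$ is a valid input for which the channel acts as $\N_i$; concavity/convexity of coherent information in the input (for fixed channel) then bounds the average by the max. I would phrase this directly via \eqref{flagged-state-coh} applied to the \emph{input} side: writing the dephased input state with its classical label in $\SWin$, Eq.~\eqref{flagged-state-coh} shows the coherent information only went down (or stayed equal) relative to the coherent information $I(\R\rangle\SWin\A)$ one could extract — and since the channel's output on the coherent and dephased inputs coincide, this is exactly the bound we need. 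The remaining bookkeeping (purifications, that adjoining a fixed pure ancilla is harmless, relabelling $\SWout\leftrightarrow\Eflag$) is routine.
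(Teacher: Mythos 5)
Your ``$\geq$'' direction is fine, and your skeleton for ``$\leq$'' (reduce to a classical switch setting, then use the flagged identity \eqref{flagged-state-coh} to decompose the coherent information into a convex combination of $\Ic(\N_i,\rho_i^{\A})$ and bound by the max) is the same as the paper's. The gap is in how you justify the reduction. You assert that dephasing the switch ``can only decrease (or preserve) the coherent information'', invoking concavity of the coherent information and a pinching argument. First, this is the wrong direction for your purposes: if dephasing could only decrease $\Ic$, then bounding the dephased input would only give a \emph{lower} bound on $\Ic(\N^{\trans{\SWout\Mout}{\SWin\Min}},\rho^{\SWin\A})$; what you need is that passing to the dephased input loses nothing, i.e.\ $\Ic(\N,\rho)\leq\Ic(\N,\bar\rho)$. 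Second, the facts you appeal to are not facts: the coherent information of a channel is in general neither concave nor convex in the input state, and dephasing part of the input can either raise or lower $\Ic$ for a general channel (compare $\N=\id$, where $\Ic=S(\rho)$ and dephasing raises it, with the discarding channel, where $\Ic=-S(\rho)$ and dephasing lowers it). Nor does the observation that the channel output on $\SWout\B$ is unchanged settle the matter, since $\Ic$ depends on the joint state with the purifying reference, and the reference changes when you change the input state.

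The step is repairable, and in fact the dephasing detour is unnecessary --- this is exactly what the paper does. Write an arbitrary purification as $\ket{\rho}^{\SWin\A\R}=\sum_i\sqrt{p_i}\,\ket{i}^{\SWin}\x\ket{\rho_i}^{\A\R}$. Since each $\mc{P}_i$ has the single Kraus operator $\proj{i}$, the switched channel annihilates all off-diagonal $\SWin$-blocks even in the presence of $\R$, so the output-with-reference state is already the flagged state $\sum_i p_i \proj{i}^{\SWout}\x(\N_i\x\id^{\R})(\proj{\rho_i}^{\A\R})$, with each branch pure on $\A\R$. Then \eqref{flagged-state-coh} gives $\Ic(\N,\rho^{\SWin\A})=\sum_i p_i\,\Ic(\N_i,\rho_i^{\A})\leq\max_i\max_{\rho^{\A}}\Ic(\N_i,\rho^{\A})$, with no comparison between dephased and undephased inputs needed. (Alternatively, your route goes through if you prove the exact equality $\Ic(\N,\rho)=\Ic(\N,\bar\rho)$, which does hold here because the complementary channel of a switched channel also dephases $\SWin$ --- but that requires an argument, not an appeal to concavity.)
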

\begin{proof}
To see this, note that any purification $\rho^{\SWin\A\R}$
of $\rho^{\SWin\A}$ can be written in the form
\begin{equation}\label{switch-form}
    \ket{\rho^{\SWin\A\R}}
    = \sum_{i} \sqrt{p_i} \ket{i}^{\SWin} \x \ket{\rho_i}^{\A\R}.
\end{equation}
Here $p_i$ is the probability that the switch is set to $i$,
and $\ket{\rho_i}^{\A\R}$ is a purification of the
channel input state $\rho_i^{\A}$ conditioned on that setting.
Conversely, given probabilities $p_i$ and states
$\rho_i^{\A}$ for each switch value, we can always find
$\ket{\rho^{\SWin\A\R}}$ satisfying \eqref{switch-form}.
Given this, we see that
\begin{equation}
    \N^{\trans{\SWout \Mout}{\SWin \Min}} (\rho^{\SWin\A\R})
    = \sum_i p_i \proj{i}^{\SWin} \x
    \N_i^{\trans{\Mout} {\Min}}(\rho_i^{\A\R})
\end{equation}
where $\rho_i^{\A\R} := \ketbra{\rho_i}{\rho_i}^{\A\R}$.
From \eqref{flagged-coh} it follows that
\begin{align}
    \Ic(\N^{\trans{\SWout \Mout}{\SWin \Min}},\rho^{\SWin\A})
    &=
    \sum_i p_i \Ic(\N_i^{\trans{\Mout} {\Min}},\rho_i^{\A})\\
    &\leq
    \sum_i p_i \max_{\rho_i^{\A}} \Ic(\N_i^{\trans{\Mout}{\Min}},\rho_i^{\A})\\
    &\leq
    \max_i \max_{\rho_i^{\A}} \Ic(\N_i^{\trans{\Mout}{\Min}},\rho_i^{\A})
\end{align}
which completes the proof.
\end{proof}


We will also require some basic facts about
\emph{pbits} (``private bits'')~\cite{Horodecki_09b},
which we gather here.
Given a bipartite system $\key$ with $\dim \Akey = \dim \Bkey = 2$
and a bipartite system $\A \B$ with $\dim \A = \dim \B$,
a \emph{perfect pbit} with \emph{key} $\key$ and \emph{shield} $\A\B$
is a state $\gamma^{\key\A \B}$ of the form
\begin{equation}
  \gamma^{\key\shield} := U^{\key\shield}
   \of[\big]{\mes^{\key} \x \sigma^{\shield}} (U\ct)^{\key\shield},
\end{equation}
where $\mes^{\key}$ is the projector onto
$\ket{\mes}^{\key} := \frac{1}{\sqrt{2}} \of{\ket{00} + \ket{11}}^{\key}$,
$\sigma^{\shield}$ is some mixed state, and
\begin{equation}
  U^{\key\shield} :=
  \sum_{i,j=0}^{1} \proj{i}^{\Akey}\x \proj{j}^{\Bkey} \x U^{\shield}_{ij}
\end{equation}
is a \emph{twisting unitary} controlled by the key $\key$
and acting on the shield $\shield$ as some unitary $U^{\shield}_{ij}$.
Note that due to
the form of $\mes^{\key}$ and $U^{\key\shield}$,
we have
\begin{equation}
    \gamma^{\key\shield}
    = \frac{1}{2} \sum_{k,l=0}^1
      \ket{k,k}^{\Akey\Bkey}
      \bra{l,l}^{\Akey\Bkey}
      \x U_{kk}^{\shield} \sigma^{\shield} (U_{ll}\ct)^{\shield}.
\end{equation}
Let us define
$U^{\Bkey\shield} := \sum_{j=0}^{1} \proj{j}^{\Bkey} \x U^{\shield}_{jj}$.
If Bob has access to $\Bkey$ and the whole shield $\shield$
then he can apply the unitary operation
$(U\ct)^{\Bkey\shield}$ to these systems,
yielding a 2-qubit maximally entangled state on $\key$. Therefore,
\begin{equation}\label{pbit-prop-1}
    I(\Akey\rangle\Bkey\shield)_{\gamma^{\key\shield}}
    = I(\Akey\rangle\Bkey\shield)_{\mes^{\key}\x\sigma^{\shield}} = 1.
\end{equation}

On the other hand, if we throw away the shield systems $\shield$,
we are left with a state $\gamma^{\key}$ that can be converted into a perfectly random shared classical bit by locally measuring systems $\Akey$ and $\Bkey$ in the standard basis.  The coherent information of a shared random bit is zero, so from \eqref{Ic-comp} we get
$I(\Akey\rangle\Bkey)_{\gamma^{\key}} \geq 0$ and thus
\begin{equation}\label{pbit-prop-2}
    I(\Akey\rangle\Bkey\shield)_{\gamma^{\key\Bshield_1} \x \mms^{\Ashield_1}} \geq 0.
\end{equation}


\subsection*{Channel construction}

We will now describe the input and output systems of our channel $\chan$.
Let $\Akey$ and $\Bkey$ be two-dimensional systems (qubits).
We call $\Akey\Bkey$ the ``key''.
Let $\A_{i,j,k}$ and $\B_{i,j,k}$
be $d$-dimensional systems
for all $i \in [\NN]$, $j \in [\r]$, $k \in [\m]$
where $[n] := \set{1, \dotsc, n}$.
We define composite systems
$\A_i := \set{\A_{i,j,k} : j \in [\r], k \in [\m]}$ and
$\Ashield := \set{\A_i : i \in [\NN]}$ for Alice,
and similar systems $\B_i$ and $\Bshield$ for Bob.
We call $\Ashield\Bshield$ the ``shield''
and call $\A_i$ ``Alice's $i$-th share of the shield''.
Let $\Eflag$ be a qubit called ``the erasure flag''.
Let $\Afull := \Akey\Ashield$, and $\Bfull := \Bkey\Bshield$.


Our construction is a switched channel
\begin{equation}\label{eq:ourchannel_again}
  \chan^{\trans{\SWout \Eflag\Bfull} {\SWin\Afull}}
  := \mc{P}_0^{\trans{\SWout}{\SWin}}
            \otimes \ngam^{\trans{\Eflag\Bfull}{\Afull}}
   + \mc{P}_1^{\trans{\SWout} {\SWin}}
            \otimes \E_p^{\trans{\Eflag\Bfull}{\Afull}}.
\end{equation}
It depends on parameters $\NN,r,m \in \mathbb{N}$
and $p, \kappa, q \in [0,1]$, where $q$ is an implicit parameter of $\ngam$.
We define $\ngam^{\trans{\Eflag\Bfull}{\Afull}}$ to be
the composite channel
\begin{equation}\label{tildeGammaDef}
    \ngam^{\trans{\Eflag\Bfull} {\Afull}}
    :=
    \E^{\trans{\Eflag\Bfull}{\Bfull}}_{\kappa}\circ \gam^{\trans{\Bfull} {\Afull}}.
\end{equation}
A useful fact regarding compositions is that
\begin{equation}\label{Ic-comp}
  \Ic(\N_1, \rho) \geq \Ic(\N_2\circ \N_1, \rho),
\end{equation}
which is just the quantum data processing inequality for coherent information~\cite{WildeBook}.

We define $\gam^{\trans{\Bfull}{\Afull}}$
by giving its Choi state,
which depends on the parameters $\NN$, $r$, $m$, and $q$.
Defining the composite systems
$\block_{i,j,k} := \A_{i,j,k} \B_{i,j,k}$ and
$\block_k := \set{\block_{i,j,k} : i \in [\NN], j \in [\r]}$,
the Choi state of $\gam^{\trans{\Bfull}{\Afull}}$ is
proportional to
\begin{align}
    \GammaChoi^{\key\shield}&= (|00\>\<00| + |11\>\<11|)^{\key}
    \x \bigotimes_{k=1}^{m} \ob[\Big]{
        \frac{\q}{2}(\omega + \sigma)
    }^{\block_k} \nonumber \\
    &+ (|00\>\<11| + |11\>\<00|)^{\key}
    \x \bigotimes_{k=1}^{m} \ob[\Big]{
        \frac{\q}{2}(\omega - \sigma)
    }^{\block_k} \label{crazystate} \\
    &+ (|01\>\<01| + |10\>\<10|)^{\key}
    \x \bigotimes_{k=1}^{m} \ob[\Big]{
       (\tfrac{1}{2}-\q)\sigma
    }^{\block_k}, \nonumber
\end{align}
where
$\omega^{\block_k}
:= \Ox_{i=1}^{\NN} \Ox_{j=1}^{\r}
\frac{1}{2}(\sap^{\block_{i,j,k}}_{+}+\sap_{-}^{\block_{i,j,k}})$, and
$\sigma^{\block_k} := \Ox_{i=1}^{\NN} \Ox_{j=1}^{\r} \sap_{+}^{\block_{i,j,k}}$
are the Eggeling-Werner data hiding states~\cite{Eggeling_02}.
Here
$\sap_{+}^{\block_{i,j,k}}$
and
$\sap_{-}^{\block_{i,j,k}}$
are
the states proportional to the projectors onto the symmetric
and anti-symmetric subspaces respectively.

In Eq.~(139) of \cite{Horodecki_09b}, a state
$\rho^{rec}_{(p,d,k)}$ is defined. Apart from
$p$, $d$ and $k$, it also implicitly depends
on a parameter $m$, so we will denote it by
$\rho^{rec}_{(p,d,k;m)}$. Our $\HHHOstate^{\key \A \B}$
is precisely $\rho^{rec}_{(q,d,r\NN;m)}$.
From Sections X-A (in particular Lemma 5)
and X-B of \cite{Horodecki_09b} we see that
$\rho^{rec}_{(q,d,r\NN;m)}$ is PPT if
\begin{equation}\label{PPT-params}
0 < q \leq 1/3 \quad\text{and}\quad
\frac{1-q}{q} \geq \left(\frac{d}{d-1}\right)^{r\NN}.
\end{equation}
Since a channel is PPT-binding iff its Choi matrix is PPT,
the same conditions suffice for $\gam$ to be PPT-binding.
This condition is key to our subsequent analysis.

We will now derive from \cite{Horodecki_09b} another important
fact about $\GammaChoi^{\key\shield}$: Defining
\begin{equation}\label{rho1-def}
    \GammaChoi^{\key\A_{1}\B_{1}} :=
\tr_{\A_{2}\B_{2}\cdots\A_{\NN}\B_{\NN}} \HHHOstate^{\key \A \B},
\end{equation}
for an appropriate choices of parameters,
$\rho^{\key\A_{1}\B_{1}}$ can be made arbitrarily close
to a perfect pbit $\gamma^{\key\A_{1}\B_{1}}$
with key $\key$ and shield $\A_{1} \B_{1}$.
In particular, we will use
\begin{lemma}\label{lem:close}
Let $q := 1/3$ and $r := 2m + \ceil{\log_2 m}$.
Then
$\tau := \norm{
            \rho^{\key\A_{1}\B_{1}}
          - \gamma^{\key\A_{1}\B_{1}}
         }_1
         \leq 16 m^{1/2} 2^{-m/4}$
for some perfect pbit $\gamma^{\key\A_{1}\B_{1}}$,
where $\norm{\cdot}_1$ denotes the trace norm.
\end{lemma}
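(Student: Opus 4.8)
The plan is to trace back through the construction of $\rho^{rec}_{(p,d,k;m)}$ in Section~X of \cite{Horodecki_09b} and simply specialise their quantitative bound to our parameters $q=1/3$, $k = r\NN$. The state $\HHHOstate^{\key\shield} = \rho^{rec}_{(q,d,r\NN;m)}$ is, up to normalisation, the block-diagonal-plus-off-diagonal object in Eq.~\eqref{crazystate}. First I would observe that the reduced state $\rho^{\key\A_1\B_1}$ obtained by tracing out all the $\block_k$ factors except the ones with $i=1,j=1$ is again of ``pbit-like'' form: on the key-diagonal blocks $\ket{00}\!\bra{00}$ and $\ket{11}\!\bra{11}$ it carries the marginal of $\tfrac{q}{2}(\omega+\sigma)$ on $\block_{1,1,k}$, on the key-off-diagonal blocks it carries the marginal of $\tfrac{q}{2}(\omega-\sigma)$, and the $\ket{01},\ket{10}$ sector carries the marginal of $(\tfrac12-q)\sigma$. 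Because $\omega^{\block_k}$ and $\sigma^{\block_k}$ are tensor products over $(i,j)$, these partial traces factorise, and the whole reduced state is governed by the single pair of one-copy hiding-state marginals on $\block_{1,1,k}$ together with the $m$-fold tensor structure in $k$.

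Next I would identify the target pbit $\gamma^{\key\A_1\B_1}$. A perfect pbit needs the key-off-diagonal block to be $U^{\shield}_{00}\sigma^{\shield}(U^{\shield}_{11})\ct$ with the \emph{same} $\sigma^{\shield}$ appearing (up to twisting) on the diagonal, and here the obstruction is exactly the mismatch between $\omega+\sigma$ on the diagonal and $\omega-\sigma$ off the diagonal, plus the presence of the $\ket{01},\ket{10}$ sector with weight $\propto(\tfrac12 - q)$. The mechanism in \cite{Horodecki_09b} is that when $r$ (their ``security parameter'', here the number of hiding-state copies per block) is large compared to $m$, the operators $\omega^{\block_k}$ and $\sigma^{\block_k}$ become nearly ``flag-like'': on most of the $m$ blocks the state looks like $\sigma$, and the norm of the correction $\omega-\sigma$ relative to $\omega+\sigma$ decays exponentially in $r/m$. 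Concretely one shows that the off-diagonal block has trace norm a factor $\le 2^{-\Theta(m)}$ smaller than the diagonal, once $r = 2m + \ceil{\log_2 m}$; this is the content of their estimate and is where the $2^{-m/4}$ comes from. The role of $q=1/3$ is to make the PPT condition \eqref{PPT-params} compatible with the smallest admissible $d$ (it forces $(1-q)/q = 2 \ge (d/(d-1))^{r\NN}$, fixing how large $d$ must be), while still leaving the $\ket{01},\ket{10}$ weight $\tfrac12 - q = \tfrac16$ — note this weight does \emph{not} go to zero, so the closeness to a pbit must come from how the twisting unitary in $\gamma$ is chosen to absorb that sector, not from its disappearing. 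Here I would invoke the explicit recovery map / twisting construction of \cite{Horodecki_09b}: after applying their ``recovery'' unitary the $\ket{01},\ket{10}$ part is rotated into the key-diagonal sector and only an $O(2^{-m/4})$ residue survives.

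Having set this up, the proof reduces to collecting the individual trace-norm errors: (i) the error from replacing $\omega\pm\sigma$ by their ``purified'' flag approximations, (ii) the error from the $\ket{01},\ket{10}$ sector after twisting, and (iii) the normalisation error (the Choi state of $\gam$ is only \emph{proportional} to $\HHHOstate$, so one divides by $\tr\HHHOstate$, which is $1 + O(2^{-\Theta(m)})$ away from the ideal normalisation). Each is $\le c\, m^{1/2} 2^{-m/4}$ for an explicit small constant, and summing with the triangle inequality gives $\tau \le 16\, m^{1/2} 2^{-m/4}$. The $m^{1/2}$ prefactor is a union-bound / combinatorial factor counting the $\le m$ blocks that can ``fail'' to look like $\sigma$; chasing its precise value is the one genuinely fiddly step.

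I expect the main obstacle to be bookkeeping rather than conceptual: one must port the estimates of \cite{Horodecki_09b}, which are phrased for their parametrisation $(p,d,k;m)$, to our relabelling $k = r\NN$ and $p = q = 1/3$, and check that the exponent they obtain, when re-expressed through $r = 2m + \ceil{\log_2 m}$, really collapses to $2^{-m/4}$ with constant $16$ — in particular that the $\NN$-dependence has fully cancelled (it should, because tracing down to a single share $\A_1\B_1$ removes all but one $i$-index, and the $r$-fold redundancy within that share is what the $2^{-m/4}$ already accounts for). Verifying that no hidden polynomial-in-$\NN$ or polynomial-in-$r$ factor sneaks into the constant is the delicate part; everything else is an application of the triangle inequality for the trace norm together with the results quoted from \cite{Horodecki_09b}.
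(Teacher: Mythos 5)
Your overall strategy---reduce to the single-share state and specialise the quantitative pbit-approximation estimates of \cite{Horodecki_09b} at $q=1/3$, $r=2m+\ceil{\log_2 m}$---is indeed the route the paper takes (it notes $\rho^{\key\A_1\B_1}=\rho^{rec}_{(q,d,r;m)}$; in passing, tracing down to $\A_1\B_1$ removes only the shares with $i\neq 1$, so the retained share still carries all $r\times m$ hiding systems, not just $j=1$). However, the mechanism you describe is wrong precisely at the points that produce the bound. First, closeness to a perfect pbit is certified by the key-off-diagonal block being \emph{large}: Proposition~4 of \cite{Horodecki_09b} states that if $1/2-\norm{A_{0011}}_1<\epsilon$ then $\tau\leq\delta(\epsilon)$, and in a perfect pbit the off-diagonal block $\tfrac12 U_{00}\sigma U_{11}\ct$ has trace norm exactly $1/2$, equal to the diagonal blocks. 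Your claim that one shows the off-diagonal block is a factor $2^{-\Theta(m)}$ \emph{smaller} than the diagonal is the opposite of what is needed (such a state would be essentially key-classical, hence far from any pbit). Second, the $\ket{01},\ket{10}$ sector: its relative weight is $\of[\big]{\tfrac{1-2q}{2q}}^m\big/\of[\big]{1+\of[\big]{\tfrac{1-2q}{2q}}^m}\approx 2^{-m}$ at $q=1/3$, because the per-block weight $\tfrac12-q$ enters raised to the $m$-th power relative to $q^m$. It \emph{does} vanish exponentially in $m$, and it must, since a twisting unitary is key-controlled and cannot move weight out of that key sector; your premise that this weight ``does not go to zero'' and is absorbed by the twisting is false, and an argument built on it would not go through.

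Third, the specific form $16\,m^{1/2}2^{-m/4}$ does not come from a union bound over the $m$ blocks, nor from a ratio of block norms. In the paper it comes from two ingredients: the exact expression (Eq.~(142) of \cite{Horodecki_09b}) $1/2-\norm{A_{0011}}_1=\tfrac12\of[\big]{1-(1-2^{-r})^m/(1+2^{-m})}$, which with $(1-2^{-r})^m\geq 1-m2^{-r}$ and $r=2m+\ceil{\log_2 m}$ yields $\epsilon=2^{-m}$; and the continuity function $\delta(\epsilon)=2\of[\big]{8\sqrt{2\epsilon}+h(2\sqrt{2\epsilon})}^{1/2}+2\sqrt{2\epsilon}\leq 2^{5/2}\of[\big]{\sqrt{8\epsilon}\log_2\tfrac{1}{8\epsilon}}^{1/2}$ from Proposition~4, whose fourth root of $\epsilon$ produces the $2^{-m/4}$ and whose logarithm produces the $m^{1/2}$. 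Without invoking this proposition (or proving a substitute continuity statement converting ``off-diagonal norm within $\epsilon$ of $1/2$'' into a trace-distance bound to some perfect pbit), your three-term triangle-inequality accounting has no source for the $2^{-m/4}$ scaling; so, although you point at the right reference and the right parameter choices, the proposal as written has a genuine gap.
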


\begin{proof}
    First note that the $\rho^{\key\A_{1}\B_{1}}$
    is simply $\rho^{rec}_{(q,d,r;m)}$.
    Adopting the notation of~\cite{Horodecki_09b},
    let $\norm{A_{0011}}_1$ be the
    norm of the upper right block of the matrix
    $\rho^{rec}_{(q,d,r;m)}$ expanded in the
    computational basis of the key system $\key$.
    In Proposition~4 of~\cite{Horodecki_09b}, it is shown that if
    $1/2 - \norm{A_{0011}}_1 < \epsilon < 1/8$
    then $\tau \leq \delta(\epsilon)$ for some function $\delta(\epsilon)$.
    The function $\delta$ is given in Eq.~(70) of \cite{Horodecki_09b} as
\begin{equation}
  \delta(\epsilon)
  := 2 \of[\big]{8 \sqrt{2 \epsilon} + h(2 \sqrt{2 \epsilon})}^{1/2} + 2 \sqrt{2 \epsilon}
\end{equation}
where $h(x) := -x \log_2 x - (1-x) \log_2 (1-x)$ is the binary entropy function.
Provided $0 \leq x \leq 1/2$,
$h(x)$ is an increasing function of $x$ and
\begin{equation}\label{h-bound}
    h(x) \leq x \log_2 \of*{\frac{1}{x^2}}.
\end{equation}
In particular,
if we assume that
$0 < 2 \sqrt{2 \epsilon} < 1/2$, i.e.,
\begin{equation}
  0 < \epsilon < 1/32,
  \label{eq:Range}
\end{equation}
then
$h(2 \sqrt{2 \epsilon}) \leq \sqrt{8 \epsilon} \log_2 \frac{1}{8 \epsilon}$
and thus
\begin{equation}
  \delta(\epsilon)
  \leq 2 \of*{
    4 \sqrt{8\epsilon}
    + \sqrt{8\epsilon} \log_2 \frac{1}{8 \epsilon}
  }^{1/2} + \sqrt{8\epsilon}.
  \label{eq:Bound1}
\end{equation}
From Eq.~\eqref{eq:Range} we also get
$\log_2 \frac{1}{8 \epsilon} > 1$.
By inserting this extra factor next to $4\sqrt{8\epsilon}$
in Eq.~\eqref{eq:Bound1} we get
\begin{equation}
  \delta(\epsilon)
  \leq 2
  \of[\Big]{ 5 \sqrt{8\epsilon} \log_2 \frac{1}{8\epsilon} }^{1/2} +
  \sqrt{8\epsilon}.
\end{equation}
We can upper bound the last term as $\sqrt{8\epsilon} < (\sqrt{8\epsilon})^{1/2} < (\sqrt{8\epsilon} \log_2 \frac{1}{8\epsilon})^{1/2}$ and the whole expression as
\begin{equation}
  \delta(\epsilon)
  \leq 2^{5/2}
  \of[\Big]{   \sqrt{8\epsilon} \log_2 \frac{1}{8\epsilon} }^{1/2}.
\end{equation}
Thus, we get:
    \begin{equation}\label{tau-bound}
        \tau
        \leq 2^{5/2} \of*{
          \sqrt{8\epsilon}
          \log_2 \frac{1}{8\epsilon}
        }^{1/2}.
    \end{equation}

    Rearranging Eq.~(142) in the proof of Theorem~6
    of~\cite{Horodecki_09b} we find
    $
        1/2 - \norm{A_{0011}}_1 =
        \frac{1}{2} \of[\Big]{
          1 - \frac{(1-2^{-r})^m}
          {1+\of*{\frac{1-2q}{2q}}^m}
        }
    $.
By omitting the factor $1/2$ we get:
\begin{align}
        \frac{1}{2} - \norm{A_{0011}}_1
        &=
        \frac{1}{2} \of[\bigg]{
          1-\frac{(1-2^{-r})^m}
            {1 + \of[\big]{\frac{1}{2q}-1}^m}
        } \\
        &<
          \frac{1 + \of[\big]{\frac{1}{2q}-1}^m - (1-2^{-r})^m}
          {1 + \of[\big]{\frac{1}{2q}-1}^m}.
\end{align}
Setting $q = 1/3$ and using
\begin{equation}
  (1-x)^m \geq 1-mx
\end{equation}
for all $m \in \mathbb{N}$ and $x \in (0,1)$, we have
\begin{align}
    \frac{1}{2} - \norm{A_{0011}}_1
    &< \frac{1 + 2^{-m} - (1-2^{-r})^m}{1 + 2^{-m}} \\
    &< \frac{1 + 2^{-m} - (1-m 2^{-r})}{1 + 2^{-m}} \\
    &= \frac{2^{-m} + m 2^{-r}}{1 + 2^{-m}}
\end{align}
which is a decreasing function of $r$.
Setting $r = 2m + \ceil{\log_2 m}$ we get
\begin{align}
    \frac{1}{2} - \norm{A_{0011}}_1
    &<
    \frac{2^{-m} + m 2^{-(2m + \log_2 m)}}
    {1 + 2^{-m}}\\
    &=
    \frac{2^{-m} + 2^{-2m}}
    {1 + 2^{-m}} = 2^{-m}.
\end{align}

    Therefore,
    for any $m > 5$, and substituting $\epsilon = 2^{-m}$ into \eqref{tau-bound} we obtain
\begin{align}
     \tau
     &\leq 2^{5/2} \of*{
        \sqrt{8 \epsilon}
        \log_2 \frac{1}{8 \epsilon}
      }^{1/2} \\
     &= 2^{5/2} \of[\big]{\sqrt{2^{3-m}}(m-3)}^{1/2}\\
     &\leq 16 \times 2^{-m/4} m^{1/2}
\end{align}
as desired.
\end{proof}

\subsection*{Main result}
The proof of our main result is based on the following two key lemmas. The first proves the coherent information is zero up to $n$ uses of the channel. The second proves that it is non-zero for some larger number of uses, hence the quantum capacity is positive.


\begin{lemma}
\label{lem:converse}
If $\gam$ is PPT-binding,
then for $\kappa\in[0,1]$, $p\in[(1+\kappa^{n})^{-1/n},1]$,
the coherent information of
$n$ uses of the channel $\chan$ is zero:
$
  Q^{(n)}(\chan) = 0.
$
\end{lemma}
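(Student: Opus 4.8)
The plan is to strip away the switch and flag structure, reducing the claim to a finite case analysis over erasure patterns, and then to exploit the lower bound on $p$. Since $\chan$ is a switched channel, so is $\chan\xp{n}$: combining the $n$ switch registers into one $2^n$-dimensional register, $\chan\xp{n}=\sum_{s\in\{0,1\}^n}\mc{P}_s\x\N_s$ with $\N_s:=\bigotimes_{i=1}^n\N_{s_i}$, $\N_0:=\ngam$, $\N_1:=\E_p$, and $\mc{P}_s$ projecting onto $\ket{s}$ of the switch register (retained in the output). By the switched-channel lemma [Eq.~\eqref{switched-coh}] it suffices to show $\Ic(\N_s,\rho)\le 0$ for every setting $s$ and every input $\rho$. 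Fix $s$ with $a$ copies of $\ngam$ and $b:=n-a$ copies of $\E_p$, and let $\R$ purify $\rho$. Writing $\ngam=\E_\kappa\circ\gam$ and using that $\E_\kappa$ and $\E_p$ are flagged channels, Eq.~\eqref{flagged-coh} expresses $\Ic(\N_s,\rho)$ as a convex combination over erasure patterns, which I would partition into: (a) all $n$ channels erase, probability $\kappa^a p^b$; (b) all $b$ copies of $\E_p$ erase but at least one copy of $\ngam$ does not, probability $p^b(1-\kappa^a)$; (c) at least one copy of $\E_p$ does not erase, probability $1-p^b$.

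For case (a) the channel sends every input to a fixed state, so its coherent information is $-S(\rho)$ by Eq.~\eqref{total-erasure}. For case (c) I would simply use the general bound $\Ic(\N,\rho)\le S(\rho)$. Case (b) is the crux: there the channel acts as $\gam\xp{a'}$ tensored with fixed-output channels for some $a'\ge 1$; since $\gam$ is PPT-binding, $\gam\xp{a'}$ has a PPT Choi matrix (tensor powers of positive-partial-transpose operators are again such), and applying a channel with PPT Choi matrix to one part of any bipartite state produces a state that is PPT across that cut. Hence in case (b) the entire output is PPT across the cut separating $\R$ from the channel output, and PPT states have non-positive coherent information, so case (b) contributes at most $0$.

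Weighting the three contributions by their probabilities gives $\Ic(\N_s,\rho)\le\bigl(1-p^b(1+\kappa^a)\bigr)S(\rho)$. Because $p\le 1$ we have $p^{-a}\ge 1$, and because $0\le\kappa\le 1$ and $a\le n$ we have $\kappa^a\ge\kappa^n$; therefore $p^b(1+\kappa^a)=p^n p^{-a}(1+\kappa^a)\ge p^n(1+\kappa^n)$, so $\Ic(\N_s,\rho)\le\bigl(1-(1+\kappa^n)p^n\bigr)S(\rho)$ uniformly in $s$ and $\rho$. The hypothesis $p\ge(1+\kappa^n)^{-1/n}$ forces $(1+\kappa^n)p^n\ge 1$, so the bound is $\le 0$ for every input; together with $Q^{(n)}(\chan)\ge 0$ (the coherent information vanishes on pure inputs) we conclude $Q^{(n)}(\chan)=0$.

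The step I expect to be the main obstacle is case (b): one must take care that the coherent information there is non-positive even though $\R$ purifies the whole $n$-use input, not just the input fed to the surviving copies of $\gam$. The clean route is to establish PPT-ness of the full output state across the $\R$-versus-rest cut directly -- using that a channel with PPT Choi matrix applied to half of \emph{any} (possibly mixed) state yields a PPT state, which follows by purifying the input, invoking the pure-state version, and tracing out the ancilla -- and then to apply non-positivity of coherent information for PPT states. Everything else (tracking the erasure probabilities and verifying $p^b(1+\kappa^a)\ge p^n(1+\kappa^n)$) is routine.
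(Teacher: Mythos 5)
Your proposal is correct and follows essentially the same route as the paper: reduce to definite switch settings via the switched-channel lemma, split the coherent information into the same three erasure cases with the same probabilities and bounds ($-S(\rho)$, $\le 0$, $\le S(\rho)$), and combine them to get $\bigl(1-(1+\kappa^n)p^n\bigr)S(\rho)\le 0$ under the stated bound on $p$, with non-negativity from pure inputs. The only cosmetic difference is in case (b): you argue non-positivity directly from PPT-ness of the output state across the $\R$ cut, whereas the paper first applies data processing to reduce to $\Ic(\gam^{\otimes n},\rho)\le 0$ — both rest on the same PPT-binding fact.
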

\begin{proof}
Using \eqref{switched-coh} and the general fact that
\begin{equation}
    \max_{\rho}
        \Ic(\N\x \mc{M},
        \rho)
        =
    \max_{\rho}
    \Ic(\mc{M}\x \N,
    \rho)
\end{equation}
we have
$
  Q^{(n)}(\chan) = \frac{1}{n}\max_{0 \leq l \leq n} I_l
$, where
\begin{equation}
    I_l := \Ic\of[\big]{
    \ngam^{\x l}
    \x
    \E_p^{\x (n-l)}
    ,
    \rho_{l}^{\Afull^1 \cdots \Afull^n}
    }
    \label{eq:Il}
\end{equation}
and $l$ is the number of switches set to use $\ngam$.
Here $\rho_{l}^{\Afull^1 \cdots \Afull^n}$
is an input state for $n$ uses of the channel that maximises the RHS of \eqref{eq:Il},
where $\Afull^i := \Akey^i \A^i_1 \cdots \A^i_{\NN}$ is the main input system for the $i$-th use of the channel.

From the definition \eqref{tildeGammaDef} and
Eq.~\eqref{flagged-coh},
we see that $I_l$
can be written as a sum of $2^n$ terms,
each corresponding to a possible setting
of the $n$ erasure flags.
We get
\begin{equation}\label{main-ineq-0}
    I_l\leq \kappa^{l} p^{n-l} (-S(\rho_l)) + (1-\kappa^{l}) p^{n-l} \Ic(\gam^{\x l} \x \E_1^{\x n-l}, \rho_l) + (1 - p^{n-l}) S(\rho_l).
\end{equation}
The first term in this bound
is the case where all $n$ channel uses erase
and it follows from \eqref{total-erasure}.
The second term upper bounds the cases where
all of the $\E_p$ uses erase but \emph{not} all
of the $\ngam$ channels do, obtained via \eqref{Ic-comp}.
The final term upper bounds the contribution
from the remaining cases using the trivial bound.

Using \eqref{Ic-comp} and the fact that $\gam$ is PPT-binding,
we obtain
$\Ic(\gam^{\x l} \x \E_1^{\x n - l}, \rho_l)
\leq \Ic(\gam^{\x n}, \rho_l) \leq 0$
and thus we can drop the second term in \eqref{main-ineq-0}:
\begin{equation}
    I_l\leq \of*{-\kappa^l p^{n-l} + 1 - p^{n-l} } S(\rho_l) \leq \of[\big]{1 - (1+\kappa^n) p^n}S(\rho_l),
\end{equation}
where the second inequality follows from $p, \kappa \in [0,1]$.
We find that $I_l \leq 0$ provided that
\begin{equation}\label{p-kappa-n}
  p \geq (1 + \kappa^{n})^{-1/n}.
\end{equation}
On the other hand, $I_l \geq 0$ since we can always choose $\rho_l$ to be pure.
This implies $I_l = 0$ and thus $Q^{(n)}(\chan) = 0$, which completes the proof.
\end{proof}


\begin{lemma}
\label{lem:achiev}
For $p\in (0,1)$, $\kappa \in (0,1/2)$,
we can choose the parameters $q, \NN, r, m, d$
such that the PPT condition \eqref{PPT-params} holds
and $Q^{(\NN+1)}(\chan)>0$.
\end{lemma}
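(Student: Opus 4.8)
The plan is to exhibit an explicit input state for $\NN+1$ uses of $\chan$ and show its coherent information is strictly positive once the parameters are chosen correctly. Following the sketch in the Methods section, I would set the switch register of channel use $0$ to select $\ngam$ and the switch registers of uses $1,\dotsc,\NN$ to select $\E_p$. On the $\ngam$ leg, I maximally entangle the key subsystem $\Akey^0$ with a reference $\Akey$ kept by Alice, and for each $i\in[\NN]$ I maximally entangle a single shield subsystem $\A^0_i$ (one of the $d$-dimensional factors inside Alice's $i$-th share, say $\A^0_{i,1,1}$, or more precisely Alice's whole $i$-th share $\A_i$ as it appears in the pbit picture) with the main input $\A^i$ of the $i$-th erasure channel. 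All remaining input factors are set to an arbitrary pure state. By Eq.~\eqref{flagged-coh} applied to the erasure flags of $\ngam$ and of the $\NN$ copies of $\E_p$, the coherent information of this input is the convex combination described in the text: (a)~with probability $\kappa$ the $\ngam$ leg erases, contributing $-S$ of the reference, which here is exactly $-1$; (b)~with probability $(1-\kappa)p^{\NN}$ the $\ngam$ leg survives but every $\E_p$ erases, contributing $\approx 0$; (c)~with probability $(1-\kappa)(1-p^{\NN})$ the $\ngam$ leg survives and at least one $\E_p$ acts as identity, contributing $\approx +1$.

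The content of cases (b) and (c) is where the pbit facts \eqref{pbit-prop-1} and \eqref{pbit-prop-2} and the closeness Lemma~\ref{lem:close} come in. In case (c), conditioned on the $\ngam$ leg surviving the $\kappa$-erasure, Alice and Bob share the Choi state $\GammaChoi^{\key\shield}$ (up to the twisting/controlled structure), and routing the surviving $\A^i$ through the identity delivers to Bob the key qubit $\Bkey^0$ together with at least one full share $\B_i$ of the shield. Tracing out the other shields, the reduced state is $\GammaChoi^{\key\A_1\B_1}$ in the relabelling of Eq.~\eqref{rho1-def}, which by Lemma~\ref{lem:close} is within trace distance $\tau\le 16m^{1/2}2^{-m/4}$ of a perfect pbit $\gamma^{\key\A_1\B_1}$; by \eqref{pbit-prop-1} that pbit has coherent information $1$, so by continuity of coherent information (Fannes-type / the Alicki--Fannes--Winter bound) the true contribution is at least $1-f(\tau)$ for an explicit $f$ with $f(\tau)\to 0$. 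In case (b), throwing away all shields leaves $\GammaChoi^{\key}$, a state $\tau$-close to $\gamma^{\key}$, which by \eqref{pbit-prop-2} has coherent information $\ge 0$; continuity again gives contribution $\ge -f(\tau)$. Summing, the overall coherent information is at least
\begin{equation*}
  -\kappa + (1-\kappa)(1-p^{\NN})\bigl(1-f(\tau)\bigr) - (1-\kappa)p^{\NN}f(\tau).
\end{equation*}

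It remains to choose parameters so this is positive while the PPT condition \eqref{PPT-params} holds. First fix $q=1/3$ and $r=2m+\ceil{\log_2 m}$ as in Lemma~\ref{lem:close}, so $\tau$, and hence $f(\tau)$, is controlled purely by $m$ and tends to $0$ as $m\to\infty$. Since $\kappa<1/2$, the leading term $-\kappa+(1-\kappa)$ is a positive constant $1-2\kappa$; choose $\NN$ large enough that $(1-\kappa)p^{\NN}<\tfrac14(1-2\kappa)$ (possible since $p<1$), which makes the bound at least roughly $\tfrac12(1-2\kappa) - f(\tau)\cdot(\text{const})$. Then take $m$ large enough that $f(\tau)$ is below this threshold, giving strictly positive coherent information for $\NN+1$ uses. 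Finally, with $q=1/3$, $r$, $\NN$, $m$ now fixed, the second PPT inequality $\tfrac{1-q}{q}=2\ge (d/(d-1))^{r\NN}$ is satisfied by taking the shield dimension $d$ large enough, and $0<q\le 1/3$ holds; hence $\gam$ is PPT-binding as required. The main obstacle is the bookkeeping in the last paragraph: the three constraints (positivity, the PPT bound, and the hypotheses of Lemma~\ref{lem:close}) are coupled through $m,r,\NN,d$, so one must order the choices carefully — fix $q,r(m)$ first, then $\NN$ from $p,\kappa$, then $m$ from $\kappa,\NN$, then $d$ last — and check no choice spoils an earlier one; a secondary technical point is pinning down the exact continuity estimate $f$ and verifying the identification of the surviving reduced state with $\rho^{\key\A_1\B_1}$ through the twisting unitary.
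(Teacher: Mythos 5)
Your proposal is correct and takes essentially the same route as the paper: the same input (switch $0$ selecting $\ngam$, the others $\E_p$, the key entangled with a reference, each of Alice's shares entangled with the main input of one erasure use), the same three-branch flag decomposition giving the bound $-\kappa+(1-\kappa)(1-p^{\NN})-(1-\kappa)\Delta$ via Lemma~\ref{lem:close}, the pbit properties and the Alicki--Fannes inequality, and an equivalent parameter ordering (the paper simply fixes $d=2\NN r$ rather than choosing $d$ large at the end, which is interchangeable because the bound in Lemma~\ref{lem:close} does not depend on $d$). The one detail to commit to is the option you already flag as the precise one: each erasure leg must carry Alice's \emph{whole} $i$-th share $\A^0_i$ (not a single $d$-dimensional factor), since Lemma~\ref{lem:close} and the pbit structure concern the reduced state on a full share.
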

\begin{proof}
Our proof has two parts.
In part (i) we prove a lower bound on $Q^{(\NN+1)}(\chan)$ by analysing
a particular input to the channel. In part (ii) we show that the channel
parameters can be chosen to make this lower bound strictly positive while,
at the same time, satisfying \eqref{PPT-params}.

(i) We number the $\NN+1$ channel uses by $\set{0,1,\dotsc,\NN}$,
and label the systems involved in the $i$-th channel use with
superscript $i$.
The switch systems are set so that the first use of the channel
acts as $\ngam$ on its main input, and the remaining $\NN$ uses
act as $\E_p$.

If $\reg{X}$ and $\reg{Y}$ are two systems of equal dimensions, we use
$\mes^{\reg{XY}} := \proj{\mes}^{\reg{XY}}$
to denote the \emph{maximally entangled state} on $\reg{XY}$ where
$\ket{\mes}^{\reg{XY}} := \sum_{i=0}^{\dim(\reg{X})-1} \ket{i}^{\reg{X}} \ket{i}^{\reg{Y}} / \sqrt{\dim(\reg{X})}$.
Alice prepares maximally entangled states
on subsystems
$\Akey^0 \Akey$ and on
$\A^{0}_i\A^{i}_{1}$
for all $i \in [\NN]$.
The purification of the overall input to the $\NN+1$ uses of the channel $\chan$ is
\begin{equation}
  \ket{\nu} :=
  \ket{0}^{\SWin^0}
  \ket{\mes}^{\Akey \Akey^0}
  \Ox_{i=1}^{\NN}
  \of[\bigg]{
    \ket{1}^{\SWin^i}
    \ket{\alpha}^{\Akey^i}
    \ket{\mes}^{\A^0_i \A^i_1}
    \Ox_{j=2}^{\NN}
    \ket{\beta}^{\A^i_j}
  }
  \label{eq:nu}
\end{equation}
shown in Fig.~\ref{fig:achievability},
where $\ket{\alpha}$ and $\ket{\beta}$ are arbitrary pure states,
$\Akey$ is a reference system, and $\SWin^i$ and $\Afull^i$ are the switch and
main input systems for the $i$-th use of $\chan$, respectively.

The switch settings cause the first use of $\chan$ to act as $\ngam$
on $\Afull^0 = \Akey^0 \A_1^0 \cdots \A_{\NN}^0$
(see \eqref{tildeGammaDef}).
With probability $\kappa$, 
$\ngam$ erases, yielding
$\ketbra{1}{1}^{\Eflag^0} \x \mms^{\Bfull^0}$.
With probability $1-\kappa$, it
sets the erasure flag to $\ketbra{0}{0}^{\Eflag^0}$
and acts as $\gam$ on $\Afull^0$, producing $\Bfull^0$.
At this point the state of
$\Akey\Bfull^0\A_1^1 \cdots \A_1^{\NN}$
is just the Choi state $\GammaChoi^{\Akey\Bkey\A\B}$
defined in \eqref{crazystate} with its
systems relabelled as follows:
$\Bfull \to \Bfull^{0}$ and
$\A_{j} \to \A^{j}_{1}$ for all $j \in [\NN]$.
The switches are set so that the remaining $\NN$ uses of $\chan$
apply $\E_p$ to the each of the systems $\Afull^j$ for each $j \in [N]$.
Bob now applies a simple post-processing operation
to the output systems of all $\NN+1$ channel uses
to obtain a state of a system
$\Bkey \A'_1 \B_1^{} \Gflag \Eflag^{0}$:
He first measures the erasure flags $\Eflag^1 \dotsb \Eflag^{\NN}$.
With probability $1 - p^{\NN}$,
at least one of these flags, say $\Eflag^{j}$,
will be in the state $\proj{0}^{\Eflag^{j}}$,
and the state of $\A_1^{j}$
has been perfectly transferred to his system $\B_1^{j}$.
Otherwise, with probability $p^{\NN}$, Bob
picks an arbitrary $j \in [\NN]$. In this case the
state of $\Eflag^{j}$ is $\proj{1}^{\Eflag^{j}}$
and the state of $\B_1^{j}$ is maximally mixed
and uncorrelated with any other system.
Now, as depicted in Fig.~\ref{fig:achievability},
Bob transfers the state of $\Eflag^{j}$
to a system $\Gflag$, the state of $\B_1^{j}$
to $\A'_1$, the state of $\B^{0}_j$ to $\B_1$
and $\Bkey^{0}$ to $\Bkey$.
Bob then discards all of his systems except for
$\Bkey \A'_1 \B_1^{} \Gflag \Eflag^{0}$,
which are now in the state
\begin{align}
    \eta^{ \key \A'_1 \B_1^{} \Gflag \Eflag^{0} } :&=
    \kappa \mms^{\Akey}
            \x \sigma^{\Bkey \A'_1 \B_1^{} \Gflag}
           \x \proj{1}^{\Eflag^{0}} \notag\\
    &{}+(1-\kappa) p^{\NN}
            \HHHOstate^{\key\B_1^{}}
            \x \mms^{\A_1'}
            \x \proj{1}^{\Gflag}
           \x \proj{0}^{\Eflag^{0}} \label{eta-def}\\
    &{}+(1-\kappa) (1-p^{\NN})
            \HHHOstate^{\key \A'_1 \B_1^{}}
            \x \proj{0}^{\Gflag}
           \x \proj{0}^{\Eflag^0}.\notag
\end{align}
Here
$\HHHOstate^{\key \A'_1 \B_1^{}}
:= \id^{\trans{\A'_1}{\A_1^{}}}
(\HHHOstate^{\key \A_1 \B_1})$,
where $\HHHOstate^{\key \A_1 \B_1}$
is the state \eqref{rho1-def} from Lemma \ref{lem:close}.
The details of $\sigma^{\Bkey \A'_1 \B_1^{} \Gflag}$ are unimportant.
The first term in \eqref{eta-def} corresponds to the case where
the first channel use
erases. When the first use doesn't erase, the case where all other uses
erase yields the second term, and the case where at least one of the other
uses does not erase gives the third term.

Let us call Bob's post-processing operation $\mc{P}$.
Using the state $\nu := \proj{\nu}$ from \eqref{eq:nu}, we can write
\begin{equation}
    (\NN+1)Q^{(\NN+1)}(\chan)\geq \Ic(\chan^{\x \NN+1}, \nu)\geq \Ic(\mc{P} \circ \chan^{\x \NN+1}, \nu)= I(\Akey \rangle \Bkey \A'_1 \B_1^{} \Gflag \Eflag^0)_{\eta^{\key \A'_1 \B_1^{} \Gflag \Eflag^0}},
\end{equation}
where the composition property \eqref{Ic-comp} was used.
Given the ``flagged'' structure of \eqref{eta-def},
we can use \eqref{flagged-state-coh}:
\begin{equation}
    (\NN+1)Q^{(\NN+1)}(\chan)
    \geq{}
    \kappa I(\Akey \rangle \Bkey \A'_1 \B_1^{} \Gflag)_
    {\mms^{\Akey} \x \sigma^{\Bkey \A'_1 \B_1^{} \Gflag}}{}+ (1-\kappa) p^{\NN} I(\Akey \rangle \Bkey \A'_1 \B_1)_
    {\HHHOstate^{\key\B_1^{}}
            \x \mms^{\A_1'}}{}+ (1-\kappa) (1-p^{\NN})I(\Akey \rangle \Bkey \A'_1 \B_1^{})_{\HHHOstate^{\key \A'_1 \B_1^{}}}.
\end{equation}
The first term is $-\kappa S(\mms^\Akey) = -\kappa$.
If
$
  \tau = \norm{
    \HHHOstate^{\key\A'_1\B_1}
  - \gamma^{\key\A'_1\B_1}
  }_1
$
for some perfect pbit $\gamma^{\key\A'_1\B_1}$,
then by the monotonicity of the trace distance under CPTP maps
\begin{equation}\label{monotonicity}
  \tau \geq \norm{
    \HHHOstate^{\key\B_1} \x \mms^{\A'_1}
  - \gamma^{\key\B_1} \x \mms^{\A'_1}
  }_1.
\end{equation}
In what follows, we will use the Alicki-Fannes inequality~\cite{Alicki_04}.
This states that for $\rho^{\reg{RB}}$ and $\sigma^{\reg{RB}}$ such that
$\tau := \norm{\rho^{\reg{RB}} - \sigma^{\reg{RB}}}_1 < 1$ we get
\begin{equation}
  \abs*{
    I(\reg{R} \rangle \reg{B})_{\rho^{\reg{RB}}}
  - I(\reg{R} \rangle \reg{B})_{\sigma^{\reg{RB}}}
  } \leq 4 \tau \log_2 \dim(\reg{R}) + 2 h(\tau).
\end{equation}
Using~\eqref{monotonicity} and properties
\eqref{pbit-prop-1}, \eqref{pbit-prop-2}, $\dim(\Akey) = 2$
together with the Alicki-Fannes inequality we have
\begin{align}
 &I(\Akey \rangle \Bkey \A'_1 \B_1)_{\HHHOstate^{\key \A'_1 \B_1}}
  \geq 1 - \Delta, \\
 &I(\Akey \rangle \Bkey \A'_1 \B_1)_{\HHHOstate^{\key\B_1^{}}
            \x \mms^{\A_1'}}
  \geq - \Delta,
\end{align}
where
\begin{equation}
    \Delta := 4 \tau + 2 h(\tau).
    \label{DeltaEqn}
\end{equation}
Therefore,
$
    (\NN+1)Q^{(\NN+1)}(\chan)
    \geq (1-\kappa)(1-p^{\NN} - \Delta) - \kappa
$
which is strictly positive if
\begin{equation}\label{DeltaRequirement}
    \Delta < 1-p^{\NN} - \frac{\kappa}{1-\kappa}.
\end{equation}

\begin{figure*}[t]
\centering

\begin{tikzpicture}[lb/.style = {font = \small, anchor = east, text height = 1.5ex, text depth = 0.25ex}]
  \def\W{4}       
  \def\h{0.9}     
  \def\w{0.15*\W} 
  \def\H{1}       

  \def\dts{$\dots$} 

  \newcommand{\Ggate}[3]{
    \path (#1*\W,0) coordinate (G#2);           
    \foreach \i in {-0.4, 1.9} {
      \path (G#2)+(\i*\w,-\h+0.25) node {\dts}; 
      \path (G#2)+(\i*\w, \h-0.25) node {\dts}; 
    }
    \foreach \i/\n in {-2.3/0, -1.3/1, 1.0/j, 3.3/N} {
      \draw (G#2)+(\i*\w, \h) coordinate (A#2-\n) --
                 +(\i*\w,-\h) coordinate (B#2-\n);
    }
    \path (A#2-0)+(0.10,-0.28) node[lb] {$\Akey^{#2}$};
    \path (A#2-1)+(0.10,-0.28) node[lb] {$\Ashield^{#2}_1$};
    \path (A#2-j)+(0.10,-0.28) node[lb] {$\Ashield^{#2}_j$};
    \path (A#2-N)+(0.10,-0.28) node[lb] {$\Ashield^{#2}_N$};
    \path (B#2-0)+(0.10, 0.20) node[lb] {$\Bkey^{#2}$};
    \path (B#2-1)+(0.10, 0.20) node[lb] {$\Bshield^{#2}_1$};
    \path (B#2-j)+(0.10, 0.20) node[lb] {$\Bshield^{#2}_j$};
    \path (B#2-N)+(0.10, 0.20) node[lb] {$\Bshield^{#2}_N$};
    \draw (G#2)++(-3.3*\w,0) -- +(0,-\h-\H) coordinate (B#2-E);
    \path (B#2-E)+(0.10, 0.20) node[lb] {$\Eflag^{#2}$};
    \node[draw, fill = white, minimum width = \W*1.1cm, minimum height = 0.8cm] at (G#2) {$#3$};
  }

  \newcommand{\Egate}[3]{
    \path (#1*\W,0) coordinate (G#2);              
    \path (G#2)+(1.6*\w,-\h+0.25) node[lb] {\dts}; 
    \path (G#2)+(1.6*\w, \h-0.25) node[lb] {\dts}; 
    \foreach \i/\n in {-1/0, 0/1} {
      \draw (G#2)+(\i*\w, \h) coordinate (A#2-\n) --
                 +(\i*\w,-\h) coordinate (B#2-\n);
    }
    \path (A#2-0)+(0.10,-0.28) node[lb] {$\Akey^{#2}$};
    \path (B#2-0)+(0.10, 0.20) node[lb] {$\Bkey^{#2}$};
    \path (A#2-1)+(0.10,-0.28) node[lb] {$\Ashield^{#2}_1$};
    \path (B#2-1)+(0.10, 0.20) node[lb] {$\Bshield^{#2}_1$};
    \draw (G#2)+(-2*\w,0) -- +(-2*\w,-\h) coordinate (B#2-E);
    \path (B#2-E)+(0.10, 0.20) node[lb] {$\Eflag^{#2}$};
    \node[draw, fill = white, minimum width = \W*0.7cm, minimum height = 0.8cm] at (G#2) {$#3$};
  }

  \Ggate{0.0}{0}{\tilde{\Gamma}_\kappa}
  \Egate{1.0}{1}{\E_p}                     \path (1.5*\W,0) node {\dts};
  \Egate{2.0}{j}{\E_p}                     \path (2.5*\W,0) node {\dts};
  \Egate{3.0}{N}{\E_p}

  \path (-0.7*\W, \h) coordinate (AR);
  \path (-0.7*\W,-\h) coordinate (BR);

  \def\r{0.5} 
  \path ($(A0-0)!0.5!(AR)$)+(0,\r) coordinate (S0);
  \path ($(A0-1)!0.5!(A1-1)$)+(0,\r) coordinate (S1);
  \path ($(A0-j)!0.5!(Aj-1)$)+(0,\r) coordinate (Sj);
  \path ($(A0-N)!0.5!(AN-1)$)+(0,\r) coordinate (SN);
  \draw (A0-0) -- (S0) -- (AR);
  \draw (A0-1) -- (S1) -- (A1-1);
  \draw (A0-j) -- (Sj) -- (Aj-1);
  \draw (A0-N) -- (SN) -- (AN-1);

  \draw (AR) -- (BR)
               -- ++(0,-\H) +(0.05, 0.20) node[lb] {$\Akey$};
  \draw (B0-0) -- ++(0,-\H) +(0.05, 0.20) node[lb] {$\Bkey$};
  \draw (B0-j) -- ++(0,-\H) +(0.10, 0.20) node[lb] {$\Bshield_1$};
  \draw (Bj-E) -- ++(0,-\H) +(0.05, 0.20) node[lb] {$\Gflag$};
  \draw (Bj-1) -- ++(0,-\H) +(0.10, 0.20) node[lb] {$\Ashield'_1$};
\end{tikzpicture}
\caption{Input that achieves positive coherent information for $\NN+1$ uses of $\chan$.
The switch systems are not shown for clarity. They have been given (pure)
inputs such that the first channel acts as $\ngam$
and the rest act as $\E_p$ on their main inputs.}
\label{fig:achievability}
\end{figure*}

(ii) We will now show how the parameters must be chosen.
First, to ensure that \eqref{PPT-params} is satisfied,
we specify that $d := 2 \NN r$ and $q := 1/3$.
Now, if $\kappa \in (0,1/2)$
then $\kappa/(1-\kappa) \in (0,1)$,
so for any $p \in (0,1)$
we can always choose $\NN$ large enough to make the RHS
of \eqref{DeltaRequirement} positive.
Fixing this value of $\NN$,
we then must choose $m$ and $r$
to make $\Delta$ small enough to
satisfy \eqref{DeltaRequirement}.
Lemma \ref{lem:close} tells us that with $q = 1/3$
and $r = 2 m + \log_2 m$, we have
 $   \tau \leq 16 m^{1/2} 2^{-m/4}$.
   Recall that
    $\Delta = 4\tau + 2 h(\tau)$,
    into which we are substituting
    $
        \tau \leq 16 m^{1/2} 2^{-m/4}
    $.
    Provided $0 \leq x \leq 1/2$,
    $h(x)$ is an increasing function of $x$,
    and $h(x) \leq 2 x \log_2 \frac{1}{x}$,
    so $h(\tau) \leq h(16 m^{1/2} 2^{-m/4}) \leq
    4 m^{3/2} 2^{-m/4}$ (provided
    $16 m^{1/2} 2^{-m/4} \leq 1/2$).
    We get
    \begin{align}\label{eq:Delta}
        \Delta
       \leq 64 m^{1/2} 2^{-m/4}
        + 8 m^{3/2} 2^{-m/4}
       \leq 72 \times 2^{-m/4} m^{3/2}.
    \end{align}
One can choose $m$ to make this as small as required.
\end{proof}

We can now prove our main result:
\begin{theorem}
Let $\chan$ be the channel defined in Eq.~\eqref{eq:ourchannel_again}.
For any positive integer $n$, if $\kappa \in (0,1/2)$ and $p \in [(1+\kappa^{n})^{-1/n},1]$ then we can choose $\q, d, r,\NN,\m$ such that:
\begin{enumerate}
  \item $Q^{(n)}(\chan) = 0$ and
  \item $Q^{(\NN+1)}(\chan) > 0$, and therefore $Q(\chan) > 0$.
\end{enumerate}
\end{theorem}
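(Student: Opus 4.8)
The plan is to obtain the theorem by combining Lemma~\ref{lem:converse} and Lemma~\ref{lem:achiev}, applied in that logical order, after a brief check that their hypotheses are jointly compatible with those of the theorem. Since $\kappa \in (0,1/2)$ we have $(1+\kappa^{n})^{-1/n} \in (0,1)$, so the interval $[(1+\kappa^{n})^{-1/n},1]$ containing $p$ is non-empty; I would set aside the endpoint $p=1$ (there $\chan$ is a switched channel between the two zero-capacity channels $\ngam$ and $\E_1$, so $Q(\chan)=0$ and part~2 fails), and otherwise take $p \in [(1+\kappa^{n})^{-1/n},1)$.

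The key observation is that Lemma~\ref{lem:achiev} fixes \emph{all} of the channel parameters $\q,d,r,\NN,\m$, and in doing so already secures the only structural hypothesis on $\gam$ that Lemma~\ref{lem:converse} needs --- namely that $\gam$ is PPT-binding, i.e.\ that~\eqref{PPT-params} holds. So I would first invoke Lemma~\ref{lem:achiev} with the given $\kappa \in (0,1/2)$ and $p \in (0,1)$: it produces $\q = 1/3$, $d = 2\NN r$, and integers $\NN,r,\m$ for which~\eqref{PPT-params} holds and $Q^{(\NN+1)}(\chan) > 0$, which is part~2. Moreover, since enlarging $\NN$ only decreases $p^{\NN}$, makes the right-hand side of~\eqref{DeltaRequirement} approach $(1-2\kappa)/(1-\kappa) > 0$, and does not disturb~\eqref{PPT-params} (as $(d/(d-1))^{r\NN} = (1+\tfrac{1}{2\NN r-1})^{\NN r} \le 2 = (1-\q)/\q$ whenever $\NN,r \ge 1$), I can additionally arrange $\NN+1 > n$.

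With these parameters frozen, $\gam$ is PPT-binding, so Lemma~\ref{lem:converse} applies; its remaining hypotheses, $\kappa \in [0,1]$ and $p \in [(1+\kappa^{n})^{-1/n},1]$, hold by assumption, so $Q^{(n)}(\chan) = 0$, giving part~1. To conclude I would note $Q(\chan) \ge Q^{(\NN+1)}(\chan) > 0$, using that $Q^{(mk)}(\N) \ge Q^{(k)}(\N)$ on product inputs and hence $Q(\N) = \sup_k Q^{(k)}(\N)$.

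I do not anticipate a genuine obstacle at this final stage, since Lemmas~\ref{lem:converse} and~\ref{lem:achiev} carry all the real content; the single point requiring care is precisely the one highlighted above --- the order of application and the compatibility of the parameter constraints, in particular that the (large) value of $\NN$, and hence of $d = 2\NN r$, forced by the achievability argument does not break the PPT condition on which the converse depends. That this holds is exactly why $d$ is tied to $\NN r$ rather than chosen freely, and verifying this interplay is the only step that is not purely mechanical.
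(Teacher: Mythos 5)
Your proposal is correct and follows essentially the same route as the paper: the theorem is obtained by intersecting the parameter regions of Lemma~\ref{lem:converse} and Lemma~\ref{lem:achiev}, with Lemma~\ref{lem:achiev} supplying exactly the PPT-binding condition that Lemma~\ref{lem:converse} needs, and then concluding $Q(\chan) \geq Q^{(\NN+1)}(\chan) > 0$. Your explicit exclusion of the endpoint $p=1$ (where Lemma~\ref{lem:achiev} does not apply and the switched channel is PPT, so part~2 genuinely fails) is, if anything, slightly more careful than the paper's own proof, which implicitly covers only $p<1$.
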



\begin{proof}
  In Lemma~\ref{lem:converse} we show that if $\gam$ is PPT-binding and $\kappa$, $p$ satisfy $\kappa \in [0,1]$ and $p \in [(1+\kappa^{n})^{-1/n},1]$, then the first statement holds. In Lemma~\ref{lem:achiev} we show that for any $\kappa \in (0,1/2)$ and $p \in (0,1)$, we can choose the parameters $\q, d, r,\NN,\m$ so that the second statement holds \emph{and} $\gam$ is PPT-binding. Therefore, for $(\kappa, p)$ in the intersection of the two regions, the channel $\chan$ satisfies both statements.
\end{proof}

To be concrete, we can choose $\kappa = 1/4$ (so that $\kappa/(1-\kappa) = 1/3$) and choose $p = (1+\kappa^n)^{-1/n}$. We can then choose $\NN$ so that $1 - p^{\NN} \geq 2/3$: we require $(1+\kappa^n)^{-\NN/n} < 1/3$. Taking logs of both sides, rearranging and using $x/\ln(2) \geq \log_2 (1+x)$, we have
$\NN > (\log_2 3)(\ln 2)n 4^n$, so let us take $\NN = 2 n 4^n$. We must now choose $m$ large enough ($m \geq 68$) that $\Delta < 1/3$ in \eqref{eq:Delta}.

\end{widetext}

\end{document}